\documentclass{llncs}
\usepackage{amsmath,amssymb}
\usepackage[ruled,vlined,linesnumbered,commentsnumbered]{algorithm2e}
\usepackage{graphicx}
\usepackage{subfig}
\usepackage{epstopdf}
\usepackage{float}
\usepackage{tabularx}
\linespread{0.96}
\captionsetup{compatibility=false}
\begin{document}
\title{On Auxiliary Entity Allocation Problem in Multi-layered Interdependent Critical Infrastructures}
\date{\vspace{-5ex}}
\author{\vspace{-5ex}}
\institute{\vspace{-5ex}}
\author{Joydeep Banerjee, Arunabha Sen and Chenyang Zhou}
\institute{School of Computing, Informatics and Decision System Engineering\\
\small Arizona State University, Tempe, Arizona 85287 \\
\small Email: \{joydeep.banerjee, asen, czhou24\}@asu.edu}

\maketitle 
\begin{abstract}
Operation of critical infrastructures are highly interdependent on each other. Such dependencies causes failure in these infrastructures to cascade on an initial failure event. Owing to this vulnerability it is imperative to incorporate efficient strategies for their protection. Modifying dependencies by adding additional dependency implications using entities (termed as \emph{auxiliary entities}) is shown to mitigate this issue to a certain extent. With this finding, in this article we introduce the Auxiliary Entity Allocation problem. The objective is to maximize protection in Power and Communication infrastructures using a budget in number of dependency modifications using the auxiliary entities. The problem is proved to be NP-complete in general case. We provide an optimal solution using Integer Linear program and a heuristic for a restricted case. The efficacy of heuristic with respect to the optimal is judged through experimentation using real world data sets with heuristic deviating $6.75 \%$ from optimal on average. 
\end{abstract}

\begin{keywords}
Interdependent network, IIM Model, Auxiliary Entity, Dependency Modification, $\mathcal{K}$ Most Vulnerable Entities.
\end{keywords}

\section{Introduction}
\label{Intro}
Critical infrastructures like power, communication, transportation networks etc. interact symbiotically to carry out their functionalities. As an example there exists strong mutual interactions between the power and communication network or infrastructure (in this article the term \emph{infrastructure} and \emph{network} are used interchangeably). Entities in the power network like generators, substations, transmission lines etc. relies on control signals carried over by communication network entities like routers, fiberoptic lines etc. Similarly all entities in the communication network relies on power supply from the power network to drive their functionalities. To capture this kind of dependencies the critical infrastructure can be modeled as a multilayered interdependent network. Failure of entities in either infrastructure impacts the operation of its own infrastructure as well as the other infrastructure. Owing to these dependencies the initial failure might result in cascade of failures resulting in disastrous impact. This has been observed in power blackouts which occurred in New York (2003) ~\cite{andersson2005causes} and India (2012) ~\cite{tang2012analysis}.

To study the nature of failure propagation in these interdependent networks it is imperative to model their dependencies as accurately as possible. Recent literature consists of a plethora of these models \cite{Bul10}, \cite{Sha11}, \cite{Gao11}, \cite{Gao11}, \cite{Sha11}, \cite{Ros08}, \cite{Zha05}, \cite{Par13}, \cite{Ngu13}, \cite{Zus11}. However each of these models have their own shortcoming in capturing the complex dependencies that might exist. For example consider a scenario with one power network entity $a_1$ and three communication entities $b_1, b_2, b_3$. The entity $a_1$ is operational provided that both entities $b_1$ \emph{and} $b_2$ are operational \emph{or} if entity $b_3$ is operational (note that the italicized words represent logical operations). None of the above models can accurately model this kind of a dependency. Sen et. al. in \cite{sen2014identification} proposed a model that uses boolean logic to capture these interdependencies. This model is referred to as the Implicative Interdependency model (IIM)). To express the dependency of an entity on other entities it uses implications which are disjunction(s) and conjunction(s) of logical terms (denoting entities of the network). With respect to the example considered above the dependency implication for the entity $a_1$ can be represented as $a_1 \leftarrow b_1 b_2 + b_3$. The boolean implication depicting the dependency is termed as \emph{Inter-Dependency Relation}. Our approach in designing solutions and analyzing the problem addressed in this paper is based on the IIM model.

We restrict our attention to an interdependent power and communication network in this paper. However the solutions can be extended to any two interdependent networks. As discussed earlier initial failure of a certain entity set in power and communication network may trigger cascading failure resulting in loss of a large number of entities. Authors in \cite{BanHardening15} proposed the \emph{Entity Hardening} problem to increase the reliability of these interdependent systems. They assumed that an entity when hardened would be resistant to both initial and cascading failure. Given a set of entities that failed initially (that is at time $t=0$) the problem was to find a minimal set of entities which when hardened would prevent failure of at least a predefined number of entities. On situations where entity hardening is not possible alternative strategies needs to be employed to increase the system reliability. Adding additional dependencies for entities in interdependent infrastructure can be beneficial in this regard. We elaborate this with the help of an example. Consider the dependency rule $a_1 \leftarrow b_1 b_2 + b_3$. With this dependency entity $a_1$ would fail if entities $(b_1,b_3)$ or $(b_2,b_3)$ fails. Now consider an entity $b_4$ is added as a disjunction to the IDR (with the new dependency being $a_1 \leftarrow b_1 b_2 + b_3 + b_4$). For entity $a_1$ to fail, either $(b_1, b_3, b_4)$ or $(b_2,b_3,b_4)$ should fail. This increases the reliability compared to the previous dependency for $a_1$. Hence adding additional dependency can be employed as a strategy when entity hardening is not possible. Any entity added to modify a dependency relation is termed as an \emph{auxiliary entity}. However due to system, cost and feasibility constraints the number of such modifications are restricted. Hence  when the number of IDR modifications are restricted one has to find which IDRs to modify and with what entities so that the impact of failure is minimized. We term this problem as the \emph{Auxiliary Entity Allocation Problem}. It is to be noted that in both \emph{Entity Hardening Problem} and Auxiliary Entity Allocation Problem the IDRs of the interdependent system are changed but these changes are carried out differently. 

The rest of the paper is organized as follows. A brief explanation of the IIM model with formal problem definition is provided in Section \ref{ProbForm}. The computational complexity of the problem and proposed solutions are discussed in Sections \ref{CompAna} and \ref{Sol} respectively. We discuss the experimental results in Section \ref{ExpRes} and conclude the paper in Section \ref{Conc}

\section{Problem Formulation using the Implicative Interdependency Model }
\label{ProbForm}
We briefly describe the IIM model introduced in \cite{sen2014identification}. Two sets $A$ and $B$ represent entities in power and communication network. The dependencies between these set of entities are captured using a set of interdependency denoted as $\mathcal{F}(A,B)$. Each function in the set $\mathcal{F}(A,B)$ is termed as an \emph{Inter-Dependency Relation} (IDR). We describe an interdependent network which composes of the entity sets $A$ and $B$ and the interdependency relations $\mathcal{F}(A,B)$ and denote it by $\mathcal{I}(A,B,\mathcal{F}(A,B))$. Through an example we explain this model further. Consider an interdependent network $\mathcal{I}(A,B,\mathcal{F}(A,B))$ with $A = \{a_1,a_2,a_3,a_4,a_5\}$ and $B=\{b_1,b_2,b_3\}$. The set of IDRs ($\mathcal{F}(A,B)$) for the interdependent network are provided in Table \ref{tbl:example1idr}. Consider the IDR $a_3 \leftarrow b_2 + b_1 b_3$ in Table \ref{tbl:example1idr}. It implies that the entity $a_1$ is operational if entity $b_2$ \emph{or} entity $b_1$ \emph{and} $b_3$ are operational. As evident, the IDRs are essentially disjunction(s) of entity (entities) in conjunctive form. We refer to each conjunctive term, e.g. $b_1 b_3$, as \emph{minterm}. The example considers dependencies where an entity in network A(B) is dependent on entities in network B(A) i.e. inter-network dependency. However this model can capture intra-network dependencies as well.

\begin{table}
	\parbox{.45\linewidth}{
		\centering
		\begin{tabular}{|l||l|}  \hline
			{\bf Power Network} & {\bf Comm. Network} \\ \hline
			$a_1 \leftarrow b_1 + b_2$ & $b_1\leftarrow a_2 $ \\ \hline
			$a_2 \leftarrow b_1b_2$ & $b_2 \leftarrow  a_2$ \\ \hline
			$a_3 \leftarrow b_2 + b_1 b_3$ & $b_3 \leftarrow a_4$ \\ \hline
			$a_4 \leftarrow b_3$ & 	$--$\\ \hline
			$a_5$ & 	$--$\\ \hline
		\end{tabular}
		\vspace{0.08in}
		\caption{IDRs for the constructed example}
		\protect\label{tbl:example1idr}
	}
	\hfill
	\parbox{.45\linewidth}{
		\centering
		\begin{tabular}{|c|c|c|c|c|c|c|c|c|}  \hline
			\multicolumn{1}{|c|}{Entities} & \multicolumn{8}{c|}{Time Steps ($t$)}\\
			\cline{2-8} & $0$ & $1$ & $2$ & $3$ & $4$ & $5$ & $6$ & $7$ \\\hline \hline
			$a_1$ & $0$ & $0$ & $0$ & $1$ & $1$ & $1$ & $1$ & $1$\\ \hline
			$a_2$ & $0$ & $1$ & $1$ & $1$ & $1$ & $1$ & $1$ & $1$ \\ \hline
			$a_3$ & $0$ & $1$ & $1$ & $1$ & $1$ & $1$ & $1$ & $1$ \\ \hline
			$a_4$ & $0$ & $1$ & $1$ & $1$ & $1$ & $1$ & $1$ & $1$ \\ \hline
			$a_5$ & $0$ & $0$ & $0$ & $0$ & $0$ & $0$ & $0$ & $0$ \\ \hline
			$b_1$ & $0$ & $0$ & $1$ & $1$ & $1$ & $1$ & $1$ & $1$ \\ \hline
			$b_2$ & $1$ & $1$ & $1$ & $1$ & $1$ & $1$ & $1$ & $1$ \\ \hline
			$b_3$ & $1$ & $1$ & $1$ & $1$ & $1$ & $1$ & $1$ & $1$ \\ \hline
		\end{tabular}
		\vspace{0.08in}
		\caption{Cascade propagation when entities $\{b_2,b_3\}$ fail initially. $0$ denotes the entity is operational and $1$ non-operational}
		\protect\label{tbl:example1cascade}
	}
\vspace{-0.2in}
\end{table}

The cascading procedure is described with respect to the interdependent network captured by IDRs in Table \ref{tbl:example1idr}. The cascade proceeds in unit time steps (denoted by $t$). Consider two entities $b_2$ and $b_3$ are attacked and made non operational by an adversary at time step $t=0$ (\emph{initial failure}). Owing to the IDRs $a_2 \leftarrow b_1 b_2$, $a_3 \leftarrow b_2 + b_1 b_3$ and $a_4 \leftarrow b_3$ the entities $a_2, a_3, a_4$ becomes non operational at $t=1$. Subsequently entities $b_1$ ($b_1 \leftarrow a_2$) and $a_1$ ($a_1 \leftarrow b_1 + b_2$) seize to operate at time step $t=2$ and $t=3$ respectively. The failure of entities after $t=0$ is termed as \emph{induced failure}. The cascade is represented in Table \ref{tbl:example1cascade}. It is to be noted that the maximum number of time steps in the cascade for any interdependent network is $|A| + |B| - 1$ (assuming initial time step as $t=0$). Hence in Table \ref{tbl:example1cascade} with number of entities being $8$ the state (operational or non-operational) of all entities are shown till $t=7$. Construction of these IDRs is a major challenge of this model. Possible strategies are (i) deep investigation of physical properties and flows in the interdependent network and (ii) consultation with domain experts. The methodology to construct these IDRs is ongoing and is expected to be addressed in future. The problem in this article assumes that the IDRs are already constructed for a given interdependent network. 

Authors in \cite{sen2014identification} introduced the $\mathcal{K}$ most vulnerable entities problem. The problem used the IIM model to find a set of $\mathcal{K}$ (for a given integer $|\mathcal{K}|$) entities in an interdependent network $\mathcal{I}(A,B,\mathcal{F}(A,B))$ whose failure at time $t=0$ (\emph{initial failure}) would result in failure of the largest number of entities due to \emph{induced failure}. For the example provided in this section consider the case where $|\mathcal{K}| = 2$. Failing entities $b_2$ and $b_3$ at $t=0$ make entities $\{a_1,a_2,a_3,a_4,b_1,b_2,b_3\}$ not operational by $t=3$. Hence $\mathcal{K} = \{b_2, b_3\}$ are one of the $2$ most vulnerable entities in the interdependent network (it is possible to have multiple $\mathcal{K}$ most vulnerable entities in an interdependent network). The set of entities failed when $\mathcal{K}$ most vulnerable entities fail initially is denoted by $A' \cup B'$ with $A' \subseteq A$ and $B' \subseteq B$. Here $A' = \{a_1,a_2,a_3,a_4\}$ and $B'=\{b_1,b_2,b_3\}$.

For a given $\mathcal{K}$ most vulnerable entities of an interdependent network, the system reliability can be increased (i.e. entities can be protected from failure) by \emph{Entity Hardening} \cite{BanHardening15}. On scenarios where entity hardening is not possible it is imperative to take alternative strategies. The number of entities failing due to \emph{induced failure} can be reduced by modifying the IDRs. One way of modifying an IDR is adding an entity as a new minterm. For example, consider the interdependent network with IDRs given by Table \ref{tbl:example1idr} and $b_2$ and $b_3$ being the $2$ (when $\mathcal{K}$ = 2) most vulnerable entities (as discussed above). Let the IDR $b_1 \leftarrow a_2$ be modified as $b_1 \leftarrow a_2+ a_5$. Hence the new interdependent network is represented as $\mathcal{I}(A,B,\mathcal{F}'(A,B))$ with the same set of IDRs as that in Table \ref{tbl:example1idr} except for IDR $b_1 \leftarrow a_2+ a_5$ as the sole modification. The entity $a_1$ introduced is termed as an \emph{auxiliary entity}. It follows that after the modification, failure of entities $b_2$ and $b_3$ at time $t=0$ would trigger failure of entities $a_2,a_3$ and $a_4$ only. Thus before modification the failure set would have been $\{a_1,a_2,a_3,a_4,b_1,b_2,b_3\} $ and after the modification it would be $\{a_2,a_3,a_4,b_2,b_3\}$. Thus the modification would lead to a fewer number of failures.

We make the following assumptions while modifying an IDR --- 
\begin{itemize}
\item It is possible to add an auxiliary entity as conjunction to a minterm. However it is intuitive that this would have no impact in decreasing the number of entities failed due to \emph{induced failure}. Hence we modify an IDR by adding only one auxiliary entity as a disjunction to a minterm
\item An auxiliary entity does not have the capacity to make an entity operational which fails due to \emph{initial failure}. So to prune the search set for obtaining a solution we discard entities in $(A' \cup B')$ as possible auxiliary entities.
\item If an IDR $D$ is modified then it is done by adding only one entity not in $A' \cup B' \cup E_D$ where $E_D$ is a set consisting of all entities (both on left and right side of the equation) in $D$. For any IDR $D \in \mathcal{F}(A,B)$ we denote $AUX = (A \cup B) / (A' \cup B' \cup E_D)$ as the set of auxiliary entities that can be used to modify $D$. 
\end{itemize}

We quantify the number of modifications done as the number of IDRs to which minterms are added as a disjunction. It should also be noted than an attacker only have information about the initial interdependent network $\mathcal{I}(A,B,\mathcal{F}(A,B))$. Hence with a budget of $|\mathcal{K}|$ it attacks and kills the $\mathcal{K}$ most vulnerable entities to maximize the number of entities killed due to induced failure. Any modification in the IDR is assumed to be hidden from the attacker. 

With these definitions the Auxiliary Entity Allocation Problem (AEAP) is defined as follows. Let $\mathcal{K}$ be the most vulnerable entities (already provided as input) of an interdependent network $\mathcal{I}(A,B,\mathcal{F}(A,B))$. With a budget $S$ in number of modifications, the task is to find which are the $S$ IDRs that are to be modified and which entity should be used to perform this modification such that number of entities failing due to \emph{induced failure} is minimized. A more formal description given below. 
\\ \\
\textbf{The Auxiliary Entity Allocation Problem (AEAP)}\\
\textbf{Instance} --- An interdependent network $\mathcal{I}(A,B,\mathcal{F}(A,B))$, $\mathcal{K}$ most vulnerable entities for a given integer $|\cal{K}|$ and two positive integers $S$ and $P_f$.\\ 
\textbf{Decision Version} --- Does there exist $S$ IDR auxiliary entity tuple $(D,x_i)$ such that when each IDRs $D\in \mathcal{F}(A,B)$ is modified by adding auxiliary entity $x_i \in AUX$ as a disjunction it would protect at least $P_f$ entities from \emph{induced failure} with $\mathcal{K}$ vulnerable entities failing initially. 

\section{Computational Complexity Analysis}
\label{CompAna}
In this section we analyze the computational complexity of the AEAP problem. The computational complexity of the problem depends on nature of the IDRs. The problem is first solved by restricting the IDRs to have one minterm of size $1$. For this special case a polynomial time algorithm exists for the problem. With IDRs in general form the problem is proved to be NP-complete.

\subsection{Special Case: Problem Instance with One Minterm of Size One}
The special case consist of IDRs which have a single minterm of size $1$ and each entity appearing exactly once on the right hand side of the IDR. With entities $a_{i}$'s and $b_{j}$'s belonging to network $A(B)$ and $B(A)$ respectively, the IDRs can be represented as $a_i \leftarrow b_j$. The AEAP problem can be solved in polynomial time for this case. We first define \emph{Auxiliary Entity Protection Set} and use it to provide a polynomial time heuristic in Algorithm \ref{heu:heu1}. The proof of optimality is not included due to space constraint. 

\begin{definition}
	\textbf{Auxiliary Entity Protection Set:} With a given set of $\mathcal{K}$ most vulnerable entities failing initially the Auxiliary Entity Protection Set is defined as the number of entities protected from \emph{induced failure} when an auxiliary entity $x_i$ is added as a disjunction to an IDR $D \in \mathcal{F}(A,B)$. It is denoted as $AP(D,x_i|\mathcal{K})$.  
	\vspace{-8mm}
\end{definition} 

\begin{algorithm}
	\small
	\KwData{An interdependent network $\mathcal{I}(A,B,\mathcal{F}(A,B))$ and set of $\mathcal{K}$ vulnerable entities}
	\KwResult{A set $D_{sol}$ consisting of IDR auxiliary entity doubles $(D,x_i)$ (with $|D_{sol}| = S$ and $P_f$ (denoting the entities protected from induced failure)}
	\Begin{			
		For each IDR $D \in \mathcal{F}(A,B)$ and each entity $x_i \in AUX$ (where $AUX = A \cup B / (A' \cup B' \cup E_D)$ as discussed in the previous section) compute the \emph{Auxiliary Entity Protection Set} $AP( D,x_i|\mathcal{K})$ \;
		Initialize $D_{sol}=\emptyset$ and $P_f = \emptyset$\;
		\While  {$S \ne 0$}{
			Choose the Auxiliary Entity Protection Set with highest $AP(x_i, D|\mathcal{K})$. In case of tie break arbitrarily. Let $D_{cur}$ be the corresponding IDR and $x_{cur}$ the auxiliary entity\;
			Update $D_{sol} = D_{sol} \cup (D_{cur},x_{cur})$ and add auxiliary entity $x_{cur}$ as a disjunction to the IDR $D_{cur}$ \;
			Update $P_f = P_f \cup AP(D_{cur}, x_{cur}|\mathcal{K})$\;
			\For{$\forall \text{ IDR } D' \in \mathcal{F}(A,B) \text{ and } x_i \in AUX \text{ of } D'$}{
				Update $AP(D',x_i| \mathcal{K}) = AP(D', x_i|\mathcal{K}) \backslash AP(D_{cur}, x_{cur}|\mathcal{K})$\;
			}
			$S \leftarrow S-1$\;
		}
	\Return{$D_{sol}$ and $P_f$} \;
}	
\caption{Algorithm solving AEAP problem for IDRs with minterms of size $1$}
\label{heu:heu1}
\end{algorithm}

\subsection{General Case: Problem Instance with an Arbitrary Number of Minterms of Arbitrary Size}
The IDRs in general are composed of disjunctions of entities in conjunctive form i.e. arbitrary number of minterms of arbitrary size. This case can be represented as  $a_{i} \leftarrow \sum^{p}_{k_1=1} \prod^{j_{k_1}}_{j=1}b_{j}$ where entities $a_{i}$ and $b_{j}'s$ belong to network $A(B)$ and $B(A)$ respectively. The given example has $p$ minterms each of size $j_{k_1}$. In Theorem \ref{th:thm4} we prove that the decision version of the AEAP problem for general case is NP complete. 

\begin{theorem}
	\label{th:thm4}
	The decision version of the AEAP problem for Case IV is NP-complete.
\end{theorem}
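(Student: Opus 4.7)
The plan is a standard two-step argument: membership in NP by direct simulation, and NP-hardness by reduction from Set Cover. For membership, I would take as the certificate the $S$ candidate tuples $(D,x_i)$, apply each modification to the IDR set, simulate the cascade starting from $\mathcal{K}$ (which terminates in at most $|A|+|B|-1$ time steps, each step only re-evaluating the polynomial-size boolean IDRs), and accept iff the number of entities protected from induced failure is at least $P_f$. This verifier runs in polynomial time, so AEAP $\in$ NP.

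For NP-hardness, I would reduce from the decision version of Set Cover: given a universe $U = \{u_1,\ldots,u_n\}$, a family $\{S_1,\ldots,S_m\}$ with each $S_j \subseteq U$, and an integer $k$, decide whether $k$ of the $S_j$'s cover $U$. Construct an AEAP instance $\mathcal{I}(A,B,\mathcal{F}(A,B))$ with $A = \{u_1,\ldots,u_n,x,b_0\}$, $B = \{w_1,\ldots,w_m\}$, and $\mathcal{F}(A,B)$ consisting of $w_j \leftarrow b_0$ for every $j$ together with $u_i \leftarrow \sum_{j \colon u_i \in S_j} w_j$ for every $i$. The helper entity $x$ appears in no IDR, so it remains operational throughout any cascade; hence $x \notin A' \cup B'$ and $x \notin E_D$ for every $D$, and $x$ lies in $AUX$ for every IDR, making it a legal auxiliary entity for every modification. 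Set $\mathcal{K} = \{b_0\}$, $S = k$, and $P_f = n+k$. The construction is polynomial in $n+m$, and every IDR has arbitrary minterms, so the resulting instance is of the form addressed in Case IV.

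The correctness equivalence has an easy forward direction and a more subtle reverse direction, which I expect to be the main obstacle. Forward: given a cover $\{S_{j_1},\ldots,S_{j_k}\}$, perform the $k$ modifications $w_{j_\ell} \leftarrow b_0 + x$; each $w_{j_\ell}$ is now protected, and for every $u_i \in U$ some $\ell$ satisfies $u_i \in S_{j_\ell}$, so the disjunct $w_{j_\ell}$ in $u_i$'s IDR stays true and $u_i$ is also protected, yielding $k+n = P_f$ protected entities. Reverse: an arbitrary solution uses $k_1$ modifications on $w$-IDRs (protecting those $w_j$'s along with, via blocked cascade, all $u_i \in U_1 := \bigcup_{j \in W} S_j$, where $W$ indexes the modified $w$-IDRs) and $k_2 = k - k_1$ modifications on $u$-IDRs (directly protecting some set $U_2$ with $|U_2| \le k_2$). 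Since $|U_1 \cup U_2| \le n$, the total protection is at most $k_1 + |U_1 \cup U_2| \le k_1 + n$, so reaching $n+k$ forces $k_1 = k$ and $|U_1| = n$, i.e., the modified indices form a $k$-set cover of $U$. Establishing this cap on ``hybrid'' strategies that mix $w$- and $u$-modifications is the key technical step; once it is in hand, the equivalence between yes-instances of Set Cover with parameter $k$ and yes-instances of AEAP with $S=k$, $P_f = n+k$ is immediate, and NP-completeness of Case IV follows.
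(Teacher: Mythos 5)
Your proof is correct and uses the same reduction source (Set Cover) with the same core encoding --- subsets become $B$-entities, elements become $A$-entities whose IDRs are disjunctions of the subsets containing them --- but the surrounding gadgetry and the reverse direction differ enough to be worth comparing. The paper engineers the initial failure with a layer $A_2$ of $|B|$ entities ($b_j \leftarrow a_{2j}$, $\mathcal{K}=A_2$) and supplies a pool $A_3$ of $X$ isolated auxiliary entities, whereas you collapse the trigger to a single source $b_0$ with $\mathcal{K}=\{b_0\}$ and one isolated helper $x$; both choices make the designated $\mathcal{K}$ verifiably a most-vulnerable set, which you should state explicitly (as the paper does) since the problem instance must be self-consistent on that point. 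Your reverse direction is actually tighter than the paper's: the paper argues informally that modifying a $B$-type IDR ``protects two entities'' versus one for an $A_1$-type IDR and concludes all modifications go to $B$-type IDRs, which is loose (a $B$-type modification can protect many $A_1$-entities, and the exchange step is never carried out). Your global counting bound --- total protection $\le k_1 + |U_1 \cup U_2| \le k_1 + n$, so hitting $P_f = n+k$ forces $k_1=k$ and $|U_1|=n$ --- closes that gap cleanly and is the right way to handle hybrid strategies. You also supply the NP-membership half, which the paper omits despite claiming completeness.

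One caveat: you reuse the single entity $x$ as the auxiliary entity for all $k$ modifications. The decision version as stated does not forbid this, but the paper's restricted-case machinery (and its own reduction, which allocates $|A_3|=X$ distinct helpers) treats auxiliary entities as consumed one per modification. To be robust to that reading, replace $x$ by $k$ isolated entities $x_1,\dots,x_k$; nothing else in your argument changes. You should also note the trivial normalization $k \le m$ so that $k$ distinct $w$-IDRs exist to modify.
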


\begin{proof}
	The hardness is proved by a reduction from Set Cover problem. An instance of a set cover problem consists of a universe $U =\{x_1,x_2,...,x_n\}$ of elements and set of subsets $\mathcal{S} = \{S_1,S_2,...,S_m\}$ where each element $S_i \in \mathcal{S}$ is a subset of $U$. Given an integer $X$ the set cover problem finds whether there are $\le X$ elements in $\mathcal{S}$ whose union is equal to $U$. From an instance of the set cover problem we create an instance of the AEAP problem. For each subset $S_i$ we create an entity $b_i$ and add it to set $B$. For each element $x_j$ in $U$ we add an entity $a_j$ to a set $A_1$. We have a set $A_2$ of entities where $|A_2| = |B|$. Let $A_2 = \{ a_{21}, a_{22}, ...,a_{2|B|}\}$ where there is an association between entity $b_j$ and $a_{2j}$. Additionally we have a set of entities $A_3$ with $|A_3| = X$ which does not have any dependency relation of its own. The set $A$ is comprised of $A_1 \cup A_2 \cup A_3$. The IDRs are created as follows. For an element $x_i$ that appears in subsets $S_x, S_y, S_z$, an IDR $a_i \leftarrow b_x + b_y + b_z$ is created. For each entity $b_j \in B$ an IDR $b_j \leftarrow a_{2j}$ is added to $\mathcal{F}(A,B)$. The cardinality of $\mathcal{K}$ most vulnerable node is set to $|A_2|$ and it directly follows that $\mathcal{K} = A_2$ comprises the set of most vulnerable entities. The value of $S$ (number of IDR modifications) is set to $X$ and $P_f$ is set to $S + |A_1|$.
	  
	Let there exist a solution to the set cover problem. Then there exist at least $X$ subsets whose union covers the set $U$. For each subset $S_k$ which is in the solution of the set cover problem we choose the corresponding entity $b_k$. Let $B'$ be all such entities. We arbitrarily choose and add an entity from $A_3$ to each IDR $b_k \leftarrow a_{2k}$ with $b_k \in B'$ to form $S = X$ distinct IDR auxiliary entity doubles. As $A_3$ type entities does not have any dependency relation thus all the entities in $B$ that correspond to the subsets in the solution will be protected from failure. Additionally protecting these $B$ type entities would ensure all entities in $A_1$ does not fail as well (as there exists at least one $B$ type entity in the IDR of $A_1$ type entities which is operational). Hence a total of $X + |A_1|$ are protected from failure.    
	  
	Similarly let there exist a solution to the AEAP problem. It can be checked easily that no entities in $B \cup A_1 \cup A_2$ has the ability to protect additional entities using IDR modification. Hence set $A_3$ can only be used as auxiliary entities. An entity from $A_3$ for the created instance can be added to an IDR of $A_1$ type entity or $B$ type entity. In the former strategy only one entity is protected from failure whereas two entities are operational when we add auxiliary entity to IDRs of $B$ type entities. Hence all the auxiliary entities are added to the $B$ type IDRs with a final protection of $X + |A_1|$ entities. For each IDR of the $B$ type entity to which the auxiliary entity is added, the corresponding subset in $\mathcal{S}$ is chosen. The union of these subsets would result in $U$ as the solution of the AEAP problem protects the failure of all $A_1$ type entities. Hence solving the set cover problem and proving the hardness stated in Theorem \ref{th:thm4}.
\end{proof}

\section{Solutions to the AEAP Problem}
\label{Sol}
We consider the following restricted case where there exists at least $S$ entities in the interdependent network which does not belong to any of the failing entities. This comprise the set of auxiliary entities that can be used. It is also imperative to use such set as auxiliary entities because they never fail from induced or initial failure when the $\mathcal{K}$ most vulnerable entities fail initially. The problem still remains to be NP compete for this case as in Theorem \ref{th:thm4} the set of entities $A_3$ belong to such class of auxiliary entities. With these definition of the special case let $\mathcal{A}$ denote a set of such auxiliary entities which can be used for IDR modifications with $\mathcal{A} \subset A \cup B / (A'' \cup B'')$ (where $A'' \cup B''$ are the entities that fails due to failing entities $A' \ cup B'$ initially). Hence we loose the notion of IDR auxiliary entity doubles in the solution as any auxiliary entity from set $\mathcal{A}$ would produce the same protection effect. Let $\mathcal{A}$ denote all such entities that can be used as auxiliary entities as defined above. We additionally assume that $|\mathcal{A}| \ge S$, i.e., there are enough auxiliary entities to suffice the AEAP budget $S$. So in both the solutions we only consider the IDRs that needs to be modified and disregard which auxiliary entity is used for this modification. We first propose an Integer Linear Program (ILP) to obtain the optimal solution in this setting. We later provide a polynomial heuristic solution to the problem. The performance of heuristic with respect to the ILP is compared in the section to follow.

\subsection{Optimal solution to AEAP problem}
We first define the variables used in formulating the ILP. Two set of variables $G = \{g_1,g_2,...,g_c\}$ and $H = \{h_1,h_2,...,h_d\}$ (with $c = |A|$ and $d = |B|$) are used to maintain the solution of $\mathcal{K}$ most vulnerable entities. Any variable $g_i \in G$ ($h_j \in H$) is equal to $1$ if $a_i \in A$ ($b_j \in B$) belongs to $\mathcal{K}$ and is $0$ otherwise. For each entity $a_i$ and $b_j$ a set of variables $x_{id}$ and $y_{jd}$ are introduced with $0 \le d \le |A| + |B| -1$. $x_{id}$ ($y_{id}$) is set to $1$ if the entity $a_i$ ($b_j$) is non operational at time step $d$ and is $0$ otherwise. Let $P$ denote the total number of IDRs in the interdependent network and assume each IDR has a unique label between numbers from $1$ to $P$. A set of variables $M = \{m_1, m_2, ...,m_{P}\}$ are introduced. The value of $m_i$ is set to $1$ if an auxiliary node is added as a disjunction to the IDR labeled $i$ and $0$ otherwise. With these definitions we define the objective function and the set of constraints in the ILP. 

\noindent
\begin{equation}\label{eqn:ilpobj1}
min \Big(\overset{|A|}{\underset{i=1}{\sum}}x_{i(|A| +|B| - 1)}+\overset{|B|}{\underset{j=1}{\sum}}y_{j(|A| + |B| -1)} \Big)
\end{equation}

\noindent
The objective function defined in \ref{eqn:ilpobj1} tries to minimize the number of entities having value $1$ at the end of the cascade i.e. time step $|A| + |B| - 1$. Explicitly this objective minimizes the number of entities failed due to induced failure. The constraints that are imposed on these objective to capture the definition of AEAP are listed below ---

\noindent
{\em Constraint Set 1:} $x_{i0} \ge g_{i}$ and $y_{j0} \ge h_{j}$. This imposes the criteria that if entity $a_i$ ($b_j$) belongs to the $\mathcal{K}$ most vulnerable entity set then the corresponding variable $x_{i0}$ ($y_{j0}$) is set to $1$ capturing the \textit{initial failure}.\\
\noindent \\
{\em Constraint Set 2}: $x_{id} \geq x_{i(d-1)}, \forall d, 1 \leq d \leq |A|+|B|-1$, and  $y_{id} \geq y_{i(d-1)}, \forall d, 1 \leq d \leq |A|+|B|-1$,. This ensures that the variable corresponding to an entity which fails at time step $t$ would have value $1$ for all $d \ge t$.\\
\noindent \\
{\em Constraint Set 3}: We use the theory developed in \cite{sen2014identification} to generate constraints to represent the cascade through the set of IDRs. To describe this consider an IDR ${a_i} \leftarrow {b_j}{b_p}{b_l} + {b_m}{b_n} + {b_q}$ in the interdependent network. Assuming the IDR is labeled $v$ it is reformulated as ${a_i} \leftarrow {b_j}{b_p}{b_l} + {b_m}{b_n} + {b_q} + m_v$ with $m_v \in M$. This is done for all IDRs. The constraint formulation is described in the following steps.

\vspace{0.02in}
\noindent
{\em Step 1:} All minterms of size greater than $1$ are replaced with a single virtual entity. In this example we introduce two virtual entities $C_1$ and $C_2$ ($C_1, C_2 \notin A \cup B$) capturing the IDRs $C_1 \leftarrow {b_j}{b_p}{b_l}$ and $C_2 \leftarrow {b_m}{b_n}$. The IDR in the example can be then transformed as ${a_i} \leftarrow C_1 + C_2 + b_q + m_v$. For any such virtual entity $C_k$ a set of variables $c_{kd}$ are added with $c_{kd} = 1$ if $C_k$ is alive at time step $d$ and $0$ otherwise. Hence all the IDRs are represented as disjunction(s) of single  entities. Similarly all virtual entities have IDRs which are conjunction of single entities. 

\vspace{0.02in}
\noindent
{\em Step 2:} For a given virtual entity $C_k$ and all entities having a single midterm of arbitrary size, we add constraints to capture the cascade propagation. Let $N$ denote the number of entities in the IDR of $C_k$. The constraints added is described through the example stated above. The variable $c_1$ with IDR $C_1 \leftarrow {b_j}{b_p}{b_l}$, constraints $c_{1d} \geq \frac{y_{j(d-1)} + y_{p(d-1)}+y_{l(d-1)}}{N}$ and $c_{1d} \le y_{j(d-1)} + y_{p(d-1)}+y_{l(d-1)} \forall d, 1 \leq d \leq m+n-1$ are added (with $N=3$ in this case). This ensures that if any entity in the conjunction fails the corresponding virtual entity fails as well. 

\vspace{0.02in}
\noindent
{\em Step 3:} In the transformed IDRs described in step 1 let $n$ denote the number of entities in disjunction for any given IDR (without modification). In the given example with IDR ${a_i} \leftarrow C_1 + C_2 + b_q + m_v$, constraints of form $x_{id} \geq c_{1(d-1)} + c_{2(d-1)} + y_{q(d-1)} + m_v - (n - 1)$ and $x_{id} \leq \frac{c_{1(d-1)} + c_{2(d-1)} + y_{q(d-1)} + m_v}{n} \forall d, 1 \leq d \leq m+n-1$ are added. This ensures that the entity $a_i$ will fail only if all the entities in disjunction become non operational. \\ \\
{\em Constraint Set 4:} To ensure that only $S$ auxiliary entities are added as disjunction to the IDRs constraint $\sum_{v=1}^{P} m_v = S$ is introduced.

\subsection{Heuristic solution to the AEAP problem}
In this section we provide a polynomial heuristic solution to the AEAP problem. We first redenote \emph{Auxiliary Entity Protection Set} as $AP(D|\mathcal{K})$ as it is immaterial which entity is added as an auxiliary entity since no auxiliary entity can fail due to any kind of failure. Along with the definition of \emph{Auxiliary Entity Protection Set}, we define \textit{Auxiliary Cumulative Fractional Minterm Hit Value} (ACFMHV) for designing the the heuristic. We first define \textit{Auxiliary Fractional Minterm Hit Value} (AFMHV) in Definition \ref{FM} which is used in defining ACFMHV (in Definition \ref{CFM}).

\begin{definition}
	\label{FM}
	The Auxiliary Fractional Minterm Hit Value of an IDR $D \in \mathcal{F}(A,B)$ is denoted by $AFMHV(D|\mathcal{K})$. It is calculated as $AFMHV(D|\mathcal{K}) = \sum_{i = 1}^{m} \frac{1}{|s_i|}$. Let $x_j$ denote the entity in the right hand side of the IDR $D$ and $m$ denotes all the minterms in which the entity $x_j$ appears over all IDRs. The parameter $s_i$ denotes $i^{th}$ such minterm with $|s_i|$ being its size. If an auxiliary entity is placed at $D$ then the value computed above provides an estimate implicit impact on protection of other non operational entities.  
\end{definition} 

\begin{definition}
	\label{CFM}
	The Auxiliary Cumulative Fractional Minterm Hit Value of an IDR $D \in \mathcal{F}(A,B)$ is denoted by $ACFMHV(D)$. It is computed as $ACFMHV(D) = \sum_{\forall x_i \in AP(D|\mathcal{K})} AFMHV(D_{x_i}|\mathcal{K})$ where $D_{x_i}$ is the IDR for entity $x_i \in AP(D|\mathcal{K})$. The impact produced by the protected entities when IDR $D$ is allocated with an auxiliary entity over set $A \cup B$ is implicitly provided by this definition. 
\end{definition} 

\begin{algorithm}[ht!]
	\small
	\KwData{An interdependent network $\mathcal{I}(A,B,\mathcal{F}(A,B))$, set of $\mathcal{K}$ vulnerable entities, set $\mathcal{A}$ of auxiliary entities and budget $S$}. 
	\KwResult{A set $D_{sol}$ consisting of IDRs (with $|D_{sol}| = S$ to each of which an auxiliary entity is added as a disjunction and $P_f$ (denoting the entities protected from induced failure)}
	\Begin{			
		Initialize $D_{sol}=\emptyset$ and $P_f = \emptyset$\;
		\While{$S \ne 0$}{
			For each IDR $D \in \mathcal{F}(A,B)$ compute the \emph{Auxiliary Node Protection Set} $AP(D|\mathcal{K})$ \;
			\If{There exists multiple IDRs having same value of highest cardinality of the set $AP(D|\mathcal{K})$}{
				For each IDR $D \in \mathcal{F}(A,B)$ compute the \emph{Auxiliary Cumulative Fractional Minterm Hit Value} $ACFMHV(D)$ \;
				Let $D_p$ be an IDR having highest $ACFMHV(D_p)$ among all $D_i$'s in the set of IDRs having highest cardinality of the set $AP(D_i|\mathcal{K})$\;
				If there is a tie choose arbitrarily\;
				Update $D_{sol} = D_{sol} \cup D_p$ and add an auxiliary entity from $\mathcal{A}$ as a disjunction to the IDR $D_p$\;
				Update $P_f = P_f \cup AP(D_{p})$\;
				Update $\mathcal{A}$ by removing the auxiliary entity added \;
				$S \leftarrow S - 1$\;
			}
			\Else{
				Let $D_p$ be an IDR having highest cardinality of the set $D \in \mathcal{F}(A,B)$\;
				Update $D_{sol} = D_{sol} \cup D_p$ and add an auxiliary entity from $\mathcal{A}$ as a disjunction to the IDR $D_p$\;
				Update $P_f = P_f \cup AP(D_{p}|\mathcal{K})$\;
				Update $\mathcal{A}$ by removing the auxilary entity added \;
				$S \leftarrow S - 1$\;
			}	
			Prune the interdependent network $\mathcal{I}(A,B,\mathcal{F}(A,B)$ by removing the IDRs for entities in $AP(D_p| \mathcal{K})$ and removing the same set of entities from $A \cup B$ \;
		}
		\Return{$D_{sol}$ and $P_f$} \;
	}		
	\caption{Heuristic solution to the AEAP problem}
	\label{alg:alg2}
\end{algorithm} 
\vspace{-0.10in}

\begin{figure*}[!htb]
	\centering
	\begin{center}
		\subfloat[][Region 1]{\includegraphics[width=4.5cm]{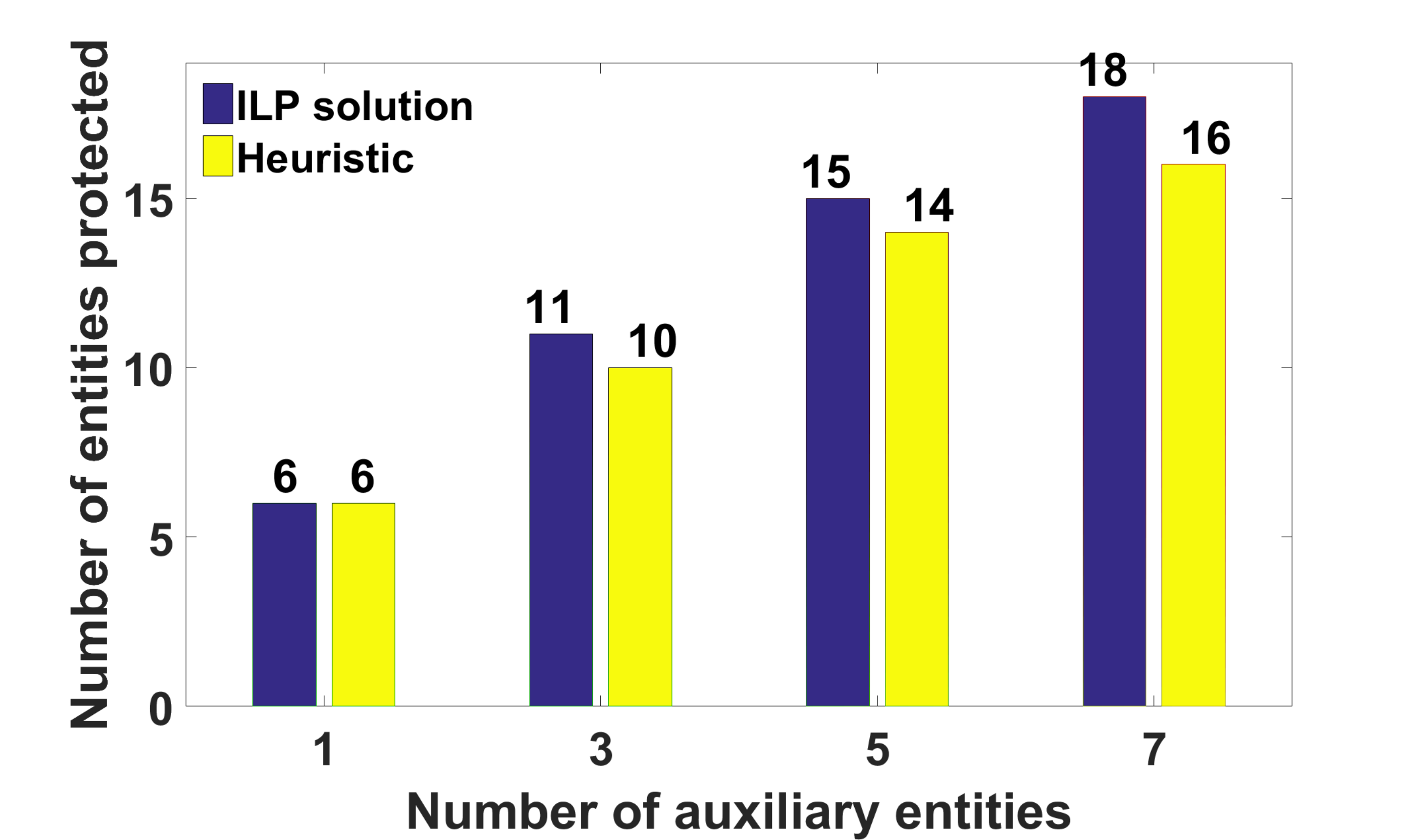}\label{fig:plot1}}
		\subfloat[][Region 2]{\includegraphics[width=4.5cm]{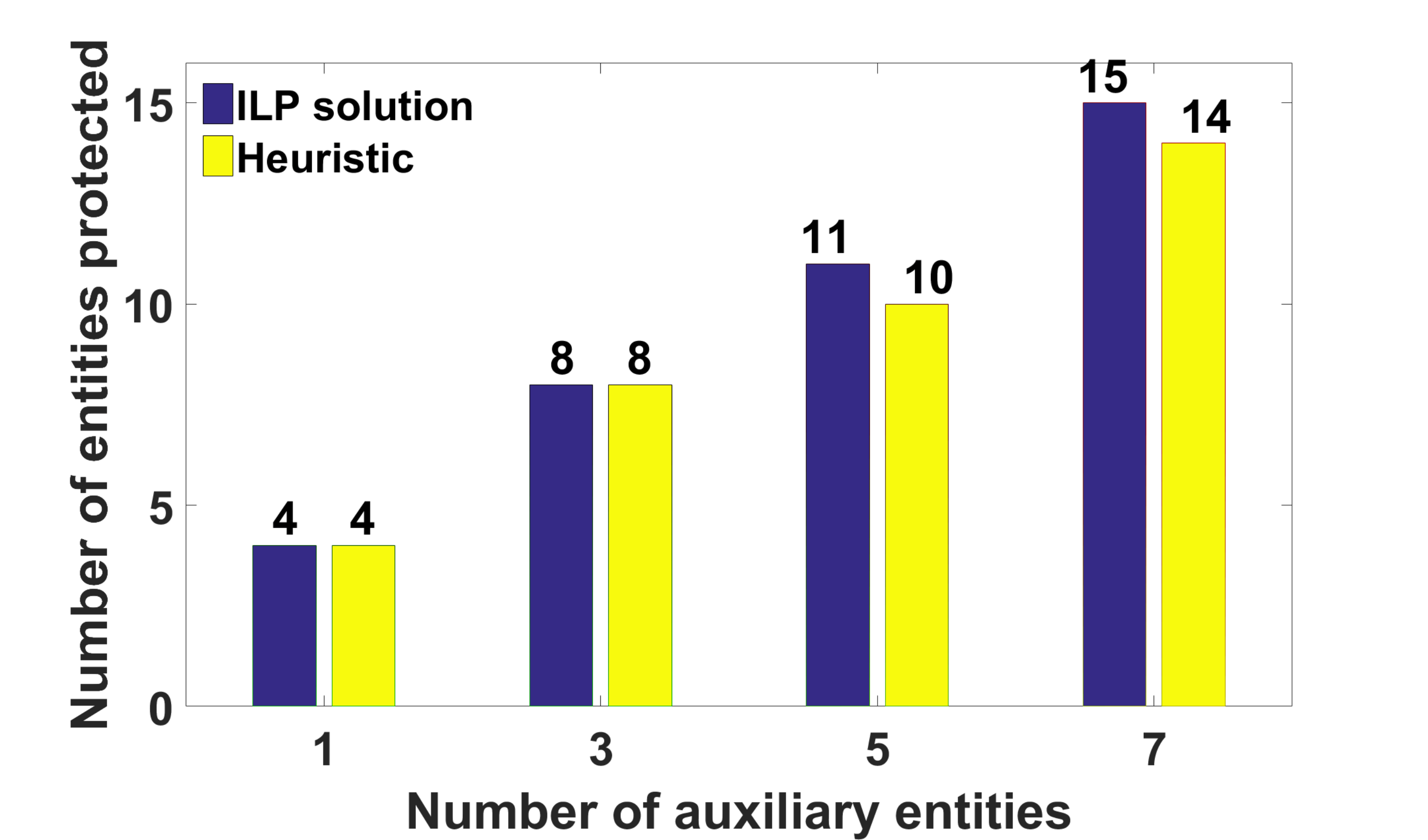}\label{fig:plot2}}\\
		\subfloat[][Region 3]{\includegraphics[width=4.5cm]{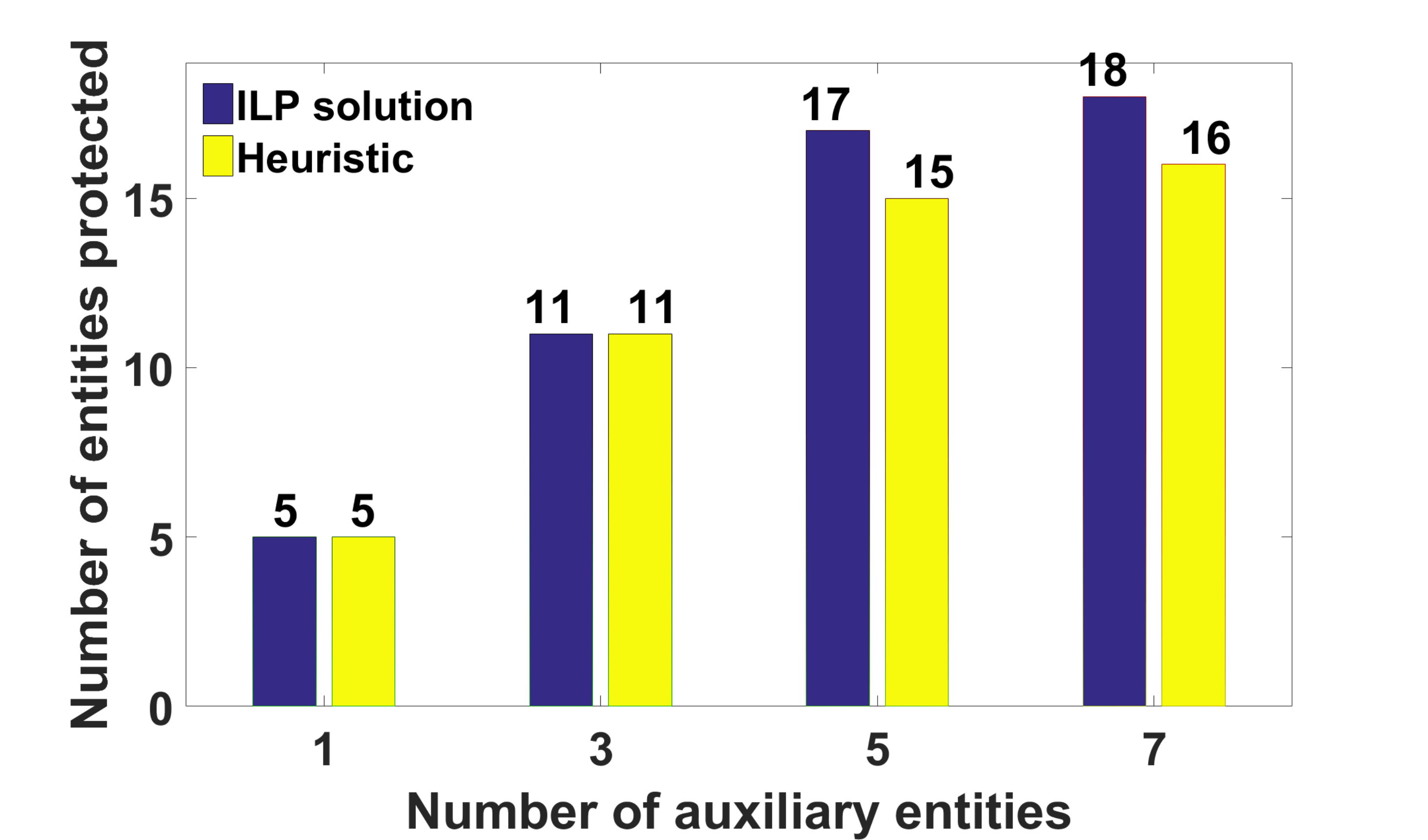}\label{fig:plot3}}
		\subfloat[][Region 4]{\includegraphics[width=4.5cm]{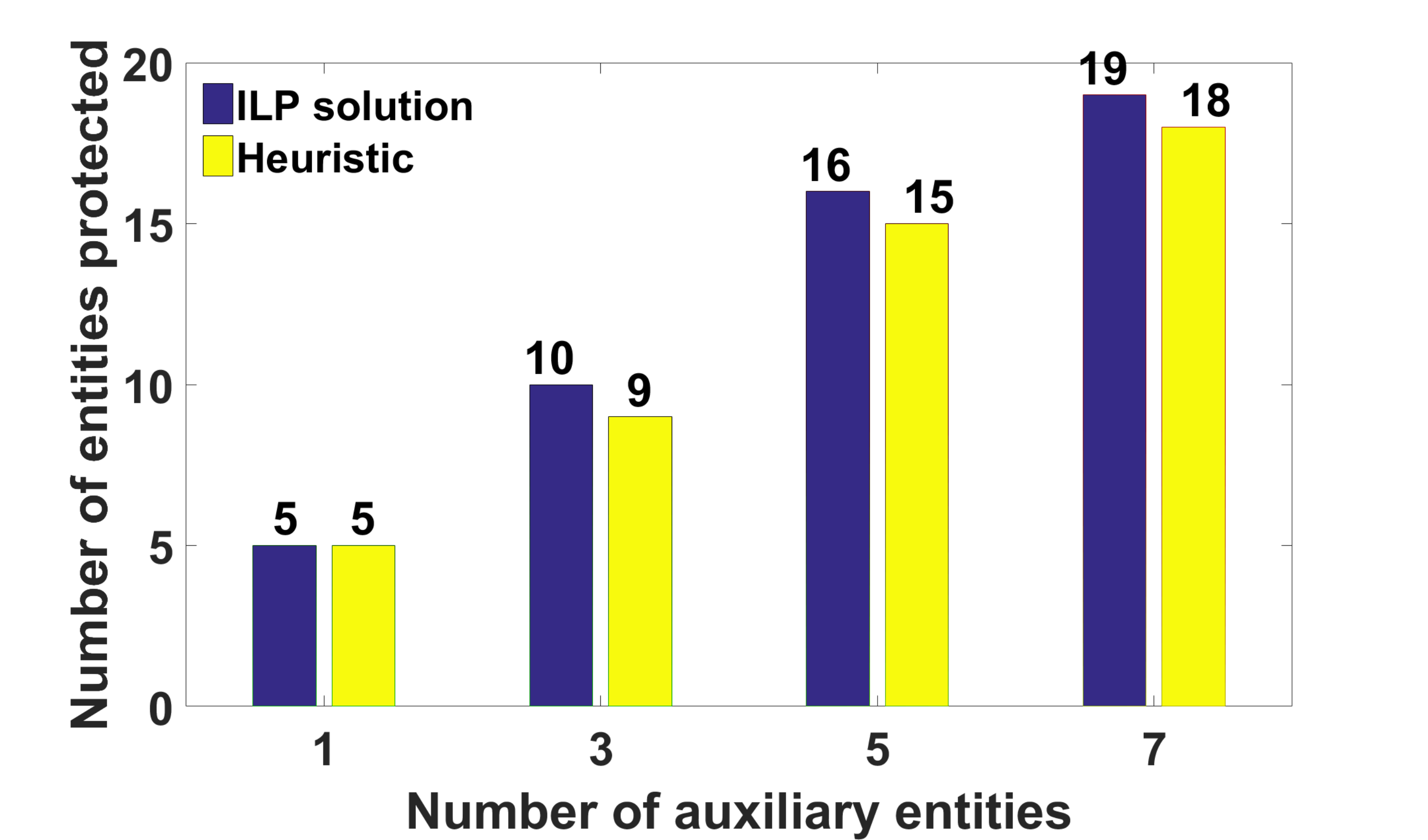}\label{fig:plot4}}
	\end{center}
	\caption{Comparison of the number of entities protected in optimal solution (ILP) and heuristic in each of the $5$ identified regions with $|\mathcal{K}| = 8$ and number of auxiliary entities (or modifications) varied as $1,3,5,7$}\label{fig:plot}
	\vspace{-12pt}
\end{figure*}

The heuristic is provided in Algorithm \ref{alg:alg2}. At any given iteration the auxiliary entity is placed at the IDR which protects the most number of entities from failure. In case of a tie the entity having highest ACFMHV value is chosen. At any given iteration the algorithm greedily maximize the number of entities protected from \emph{induced failure}. Algorithm \ref{alg:alg2} runs in polynomial time, more specifically the time complexity is $\mathcal{O}(S n (n+m)^2)$ (where $n=|A|+|B|$ and $m=$ Number of minterms in $\mathcal{F}(A,B)$).

\section{Experimental Results}
\label{ExpRes}
The solution of the heuristic is compared with the ILP to judge its efficacy. We perform the experiments on real world data sets with the IDRs generated artificially based on some predefined rules. Platts (www.platss.com) and GeoTel (www.geo-tel.com) provided the power and communication network data respectively. The power network data consisted of two types of entity --- 70 power plants and 470 transmission lines. There are three types of entity in the communication network data --- 2,690 cell towers, 7,100 fiber-lit buildings and 42,723 fiber links. The data corresponds to the Maricopa county region of Arizona, USA. To perform the experimental analysis we picked four non overlapping regions in Maricopa county. They are labelled as Region 1, 2, 3, and 4 respectively. It is to be noted that the union of these regions does not cover the entry county. For each region we filtered out the entities from our dataset and constructed the IDRs based on rules defined in \cite{sen2014identification}. 

The cardinality of $\mathcal{K}$ most vulnerable nodes was set to $8$ and was calculated using the ILP described in \cite{sen2014identification}. The number nodes failed in each region due to initial failure of the most vulnerable nodes are $28, 23, 28, 28$ respectively. We vary the number of auxiliary entities placed (or modifications) from $1$ to $7$ in steps of $2$. For each region and modification budget the number of entities protected from failure for the heuristic was compared with the ILP solution and is plotted in Figure \ref{fig:plot}. The maximum possible percentage difference of the heuristic from optimal for any region and modification budget pair is observed to be a $11.76 \%$ in Region 3 with $5$ auxiliary entities (Figure \ref{fig:plot3}). On an average the heuristic performed very near to the optimal with a difference of $6.75 \%$.

\section{Conclusion}
\label{Conc}
In this paper we introduce the auxiliary entity allocation problem in multilayer interdependent  network  using  the  IIM  model.  Entities  in  multilayer  network can be protected from an initial failure event when auxiliary entities are used to modify the IDRs. With a budget on the number of modifications, the problem is proved to be NP-complete. We provide an optimal solution using ILP and polynomial heuristic for a restricted case of the problem. The optimal solution was compared with the heuristic on real world data sets and on an average deviates $6.75\%$ from the optimal.
\begin{footnotesize}
\bibliographystyle{splncs03}
\bibliography{references,referencesBibToAdd}

\begin{thebibliography}{10}
\providecommand{\url}[1]{\texttt{#1}}
\providecommand{\urlprefix}{URL }

\bibitem{andersson2005causes}
Andersson, G., Donalek, P., Farmer, R., Hatziargyriou, N., Kamwa, I., Kundur,
  P., Martins, N., Paserba, J., Pourbeik, P., Sanchez-Gasca, J., et~al.: Causes
  of the 2003 major grid blackouts in north america and europe, and recommended
  means to improve system dynamic performance. Power Systems, IEEE Transactions
  on  20(4),  1922--1928 (2005)

\bibitem{BanHardening15}
Banerjee, J., Das, A., Zhou, C., Mazumder, A., Sen, A.: On the entity hardening
  problem in multi-layered interdependent networks. In: Computer Communications
  Workshops (INFOCOM WKSHPS), 2015 IEEE Conference on. pp. 648--653. IEEE
  (2015)

\bibitem{Zus11}
Bernstein, A., Bienstock, D., Hay, D., Uzunoglu, M., Zussman, G.: Power grid
  vulnerability to geographically correlated failures-analysis and control
  implications. arXiv preprint arXiv:1206.1099  (2012)

\bibitem{Bul10}
Buldyrev, S.V., Parshani, R., Paul, G., Stanley, H.E., Havlin, S.: Catastrophic
  cascade of failures in interdependent networks. Nature  464(7291),
  1025--1028 (2010)

\bibitem{Gao11}
Gao, J., Buldyrev, S.V., Stanley, H.E., Havlin, S.: Networks formed from
  interdependent networks. Nature Physics  8(1),  40--48 (2011)

\bibitem{Ngu13}
Nguyen, D.T., Shen, Y., Thai, M.T.: Detecting critical nodes in interdependent
  power networks for vulnerability assessment  (2013)

\bibitem{Par13}
Parandehgheibi, M., Modiano, E.: Robustness of interdependent networks: The
  case of communication networks and the power grid. arXiv preprint
  arXiv:1304.0356  (2013)

\bibitem{Ros08}
Rosato, V., Issacharoff, L., Tiriticco, F., Meloni, S., Porcellinis, S.,
  Setola, R.: Modelling interdependent infrastructures using interacting
  dynamical models. International Journal of Critical Infrastructures  4(1),
  63--79 (2008)

\bibitem{sen2014identification}
Sen, A., Mazumder, A., Banerjee, J., Das, A., Compton, R.: Identification of k
  most vulnerable nodes in multi-layered network using a new model of
  interdependency. In: Computer Communications Workshops (INFOCOM WKSHPS), 2014
  IEEE Conference on. pp. 831--836. IEEE (2014)

\bibitem{Sha11}
Shao, J., Buldyrev, S.V., Havlin, S., Stanley, H.E.: Cascade of failures in
  coupled network systems with multiple support-dependence relations. Physical
  Review E  83(3),  036116 (2011)

\bibitem{tang2012analysis}
Tang, Y., Bu, G., Yi, J.: Analysis and lessons of the blackout in indian power
  grid on july 30 and 31, 2012. In: Zhongguo Dianji Gongcheng
  Xuebao(Proceedings of the Chinese Society of Electrical Engineering).
  vol.~32, pp. 167--174. Chinese Society for Electrical Engineering (2012)

\bibitem{Zha05}
Zhang, P., Peeta, S., Friesz, T.: Dynamic game theoretic model of multi-layer
  infrastructure networks. Networks and Spatial Economics  5(2),  147--178
  (2005)

\end{thebibliography}
\end{footnotesize}
\end{document}